\documentclass[a4paper,UKenglish,cleveref,autoref]{lipics-v2021}

\title{Arc Kayles is PSPACE-complete}
\titlerunning{Arc Kayles is PSPACE-complete}

\author{\'{E}douard Bonnet}{CNRS, ENS de Lyon, Université Claude Bernard Lyon 1, LIP UMR 5668, Lyon, France \and \url{http://perso.ens-lyon.fr/edouard.bonnet}}{edouard.bonnet@ens-lyon.fr}{https://orcid.org/0000-0002-1653-5822}{}

\authorrunning{\'E. Bonnet}

\Copyright{Édouard Bonnet}

\category{}

\relatedversion{}

\supplement{}

\nolinenumbers 

\hideLIPIcs  

\EventEditors{John Q. Open and Joan R. Access}
\EventNoEds{2}
\EventLongTitle{42nd Conference on Very Important Topics (CVIT 2016)}
\EventShortTitle{CVIT 2016}
\EventAcronym{CVIT}
\EventYear{2016}
\EventDate{December 24--27, 2016}
\EventLocation{Little Whinging, United Kingdom}
\EventLogo{}
\SeriesVolume{42}
\ArticleNo{23}

\usepackage[utf8]{inputenc}  

\usepackage[T1]{fontenc}
\usepackage{lmodern}

\usepackage{amsmath}  
\usepackage{amssymb}
\usepackage{amsthm}
\usepackage{bbm}
\usepackage{accents}
\usepackage{complexity}

\usepackage{booktabs}
\usepackage{paralist}
\usepackage{bm}
\usepackage{fixmath}

\makeatletter
\newtheorem*{rep@theorem}{\rep@title}
\newcommand{\newreptheorem}[2]{%
\newenvironment{rep#1}[1]{%
 \def\rep@title{#2 \ref{##1}}%
 \begin{rep@theorem}}%
 {\end{rep@theorem}}}
\makeatother

\newreptheorem{theorem}{Theorem}
\newreptheorem{lemma}{Lemma}
\newreptheorem{corollary}{Corollary}

\usepackage{pgfplots}
\usepackage{xspace}
\usepackage{tikz}
\usepackage{tikz-3dplot}

\usepackage[ruled,vlined,linesnumbered]{algorithm2e}

\usetikzlibrary{fit} 
\usetikzlibrary{arrows}
\usetikzlibrary{arrows.meta}
\usetikzlibrary{patterns}
\usetikzlibrary{calc}
\usetikzlibrary{shapes}
\usetikzlibrary{positioning}
\usetikzlibrary{math}
\usetikzlibrary{shapes.symbols, shapes.geometric}
\usetikzlibrary{decorations.pathreplacing,calligraphy}
\usetikzlibrary{decorations.pathmorphing, backgrounds}
\usepackage[scr=boondox,scrscaled=1.05]{mathalfa}

\newcommand{\true}{\textsf{True}\xspace}
\newcommand{\false}{\textsf{False}\xspace}

\newcommand{\sg}{\mathsf{sg}}

\crefname{conjecture}{Conjecture}{Conjectures}
\crefname{claim}{Claim}{Claims}

\newcommand{\Oh}{\mathcal{O}}

\renewcommand{\P}{\mathcal{P}}
\newcommand{\N}{\mathcal{N}}

\DeclareMathOperator{\mex}{mex}

\begin{document}

\maketitle

\begin{abstract}
  We show that \textsc{Arc Kayles} is \PSPACE-complete.
  This solves a question raised by Schaefer in 1978.
\end{abstract}

\section{Introduction}\label{sec:intro}

In a~seminal work~\cite{Schaefer78}, Schaefer introduced and studied several 2-player games on formulas and on graphs, such as \textsc{Pos CNF}, \textsc{Pos DNF}, \textsc{Node Kayles}, \textsc{Directed Generalized Geography}, and \textsc{Arc Kayles}, and showed the \PSPACE-completeness of all of them but \textsc{Arc Kayles}.
Despite active work and repeated attention to its complexity, this question had remained open since 1978.

In this introduction, we only define \textsc{Node Kayles} and \textsc{Arc Kayles}.
In~\textsc{Node Kayles}, two players alternate removing the closed neighborhood of a~chosen remaining vertex.
Under the convention of \emph{normal play}, the first player without a~legal move loses (and the opponent wins).
In \textsc{Arc Kayles}, the two players alternate removing the two endpoints of a~remaining edge.
(Despite the name, it is a~game on undirected graphs.)

So, \textsc{Arc Kayles} is to maximal matchings what \textsc{Node Kayles} is to maximal independent sets.
The winner of \textsc{Arc Kayles} is the one playing the last edge of a~maximal matching (in the original graph), and that of \textsc{Node Kayles} is the one playing the last vertex of a~maximal independent set.

Polynomial-time algorithms were obtained for \textsc{Arc Kayles} on simple graph families~\cite{Huggan16}.
The parameterized complexity of this problem with respect to various structural parameters has been investigated~\cite{Hanaka24,Hanaka26}.
Recently, the misère partizan~\cite{Burke25} and the partizan \cite{Burke26b} versions of \textsc{Arc Kayles} were shown to be \PSPACE-complete.
Other variants of \textsc{Arc Kayles} have been considered~\cite{Dailly19,Burke26}.
Schaefer's paper~\cite{Schaefer78} also contained the \PSPACE-completeness of a~variant where the legal moves are restricted to a~prescribed set of edges (until this subset gets depleted).

We resolve the complexity of~\textsc{Arc Kayles}.

\begin{theorem}\label{thm:main}
  \textsc{Arc Kayles} is \PSPACE-complete.
\end{theorem}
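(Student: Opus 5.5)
Membership in \PSPACE is standard. Every move removes two vertices, so a play on an $n$-vertex graph lasts at most $n/2$ moves. A depth-first evaluation of the game tree therefore uses polynomial space: we store the current position and, at each level of the recursion, the index of the edge being tried. All the work is in the hardness direction.

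For hardness, I would reduce from \textsc{Node Kayles}, which Schaefer proved \PSPACE-complete. A reduction from \textsc{Pos CNF} would also do, but \textsc{Node Kayles} is closer to maximal matchings. The reduction is gadget-based. Each vertex $v$ of the \textsc{Node Kayles} instance $G$ becomes a gadget $H_v$ with a distinguished ``selection'' edge $e_v$. Each edge $uv$ of $G$ becomes a connection between $H_u$ and $H_v$, so that playing $e_u$ removes vertices that make $e_v$ unusable.

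The difficulty is that in \textsc{Arc Kayles} playing an edge deletes only its two endpoints, not a whole neighbourhood. To propagate the deletion of $N[v]$, I would attach to each $e_v$ a set of pendant structures, for example pendant paths or stars. Once $e_v$ is played, these become small isolated components whose Grundy values are known and small: $P_2$ has value $1$, $P_3$ has value $1$, and so on. The values are padded so that they cancel in pairs, using nim-sum arguments via Sprague--Grundy theory. In particular, I would choose the gadgets so that every residual component left after a ``selection'' has Grundy value $0$ in total, and so that every non-selection move inside a gadget is a mistake.

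The key steps, in order, are as follows.
\begin{enumerate}
\item Design a vertex gadget whose options, analysed by Grundy values, are either the intended selection move or a move that the opponent can answer so as to restore an equivalent position. This is a mirroring or ``tit-for-tat'' argument.
\item Design an edge-connector in which the vertices shared by the two gadgets are consumed by the selection of either endpoint.
\item Prove by induction on the number of selections that the reduced \textsc{Arc Kayles} position is equivalent to the \textsc{Node Kayles} position $G - N[S]$, where $S$ is the set of selected vertices. The claim is that the outcomes agree, so that the first player wins one game if and only if they win the other.
\end{enumerate}

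The main obstacle is step~1 together with its interaction with step~2. We must rule out ``cheating'' moves, meaning edges inside or between gadgets other than the $e_v$. Such a move could destroy a neighbour's gadget without selecting anything, or partially consume a connector. My first attempt would be to make every non-selection edge incident to a high-degree hub with many pendant $P_2$'s in duplicate. The intent is that any cheating move leaves an odd imbalance which the opponent immediately fixes by a copy-move, regaining the intended parity. I expect that proving this responding strategy works uniformly, with no cascading interactions through connectors, will require the most careful case analysis.
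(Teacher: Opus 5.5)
Your \PSPACE-membership argument is fine, but the hardness part has a genuine gap. It is a plan whose decisive step is deferred, and step~2, read literally, cannot work. In \textsc{Arc Kayles} a move deletes exactly its two endpoints and nothing else that carries an edge. So playing $e_u$ makes $e_v$ unavailable if and only if $e_u \cap e_v \neq \emptyset$. Requiring this exactly when $uv \in E(G)$ (non-neighbours must survive) forces $G$ to be the line graph of the graph formed by the edges $e_v$, and \textsc{Node Kayles} on line graphs is just \textsc{Arc Kayles} again. Your pendant structures do not change this: they only split off from the selected gadget itself and never reach the neighbours' gadgets.

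So at best $e_v$ becomes a \emph{losing} move rather than an unavailable one. Everything then rests on knowing the outcome of positions reached after arbitrary interleavings of intended and cheating moves, which is exactly the part you leave to ``the most careful case analysis''. Schaefer's immediate punishment of cheating in \textsc{Node Kayles} uses the fact that one move can delete an arbitrarily large neighbourhood. There is no analogue here, since a minimum maximal matching and a maximum matching differ by a factor at most~$2$. A copy or tit-for-tat strategy built from duplicated pendant $P_2$'s gives no reason why a matching reply exists when a cheating move hits a vertex shared by two gadgets, which is precisely where interactions cascade.

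The missing idea is a family of positions whose Sprague--Grundy value is known in closed form, however play reached them. The paper's key lemma: if a graph consists of an induced biclique $K_{a,b}$ plus an independent set $I$ in which every biclique vertex has a degree-$1$ neighbour, its value is $(a+b) \bmod 2 + 2\,(\min(a,b) \bmod 2)$. This holds whatever the other edges between the biclique and $I$ are.

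The reduction is then from \textsc{Pos CNF}, not \textsc{Node Kayles}:
\begin{itemize}
\item the $m$ clause vertices $b_j$ (with $m$ made odd) form one side of a biclique whose other side is $R=m+2n+2$ vertices $v_i$ with pendants $t_i$;
\item each variable vertex $f_i$ is adjacent to $v_i$ and to the $b_j$ with $x_i \in C_j$;
\item a hub $s$ is adjacent to $t$, to each $a_j$ (with $a_jb_j$ an edge), and to the $y_i$ of $K=4n+6$ pendant edges $y_iz_i$ that act as pass moves.
\end{itemize}
Any move removing $s$ or touching a $b_j$ leads, possibly after one well-chosen reply, to components evaluated by the lemma and the xor rule. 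While pass moves remain, the parity padding makes every such deviation losing. The one exception is $sa_j$ for a clause all of whose literal vertices are gone, and the parity of $r+k$ makes that move exploitable by the second player but not the first. This lemma is what replaces your unproven mirroring argument: it evaluates cheating positions exactly instead of requiring an explicit punishing sequence.
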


In the \PSPACE-hardness proof of \textsc{Node Kayles} by Schaefer~\cite{Schaefer78}, a~\emph{cheating} move (one that does not correspond to something legal in the problem \textsc{Node Kayles} is reduced from) is immediately punished: an appropriate answer to it immediately wins the game. 
This is viable because the size of a~maximal independent set can be arbitrarily smaller than the size of a~maximum independent set.
This is not available for \textsc{Arc Kayles} since the minimum maximal matching and maximum matching sizes are within a~factor~2 of each other.

In other words, there cannot be short exits.
In principle, we will have to consider long sequences of possibly cheating moves (in effect, a~whole game tree of them).
This is intuitively why showing the hardness of \textsc{Arc Kayles} is a~more challenging task.
Indeed, the variants of \textsc{Arc Kayles} shown to be hard restrict the set of legal moves. 

\subparagraph*{Proof outline.}
A~crucial lemma is that the outcome of the game on any graph obtained from the \emph{biclique} $K_{a,b}$ (made by two fully-adjacent independent sets of size~$a$ and~$b$) by adding an independent set~$I$, such that every vertex of the biclique has a~pendant neighbor in~$I$, is fully determined by~$a$ and~$b$ only.
Actually, a~stronger property holds: the \emph{Sprague--Grundy value} (see~\cref{subsec:value}) only depends on~$a$ and~$b$. 

The reduction from \textsc{Pos CNF} (a~2-player game based on \textsc{SAT} where one player wants to satisfy a~negation-free instance and the opponent has the opposite goal) or \textsc{Maker--Breaker} (see~\cref{subsec:pos-cnf}) is built around that lemma.
If any edge incident to a~special vertex $s$ is played (and some mild technical assumptions hold), the graph breaks into connected components whose values can be determined by the lemma.
The part of the graph that actually depends on the \textsc{Pos CNF} instance (and not merely on its number of variables and clauses) is between a~side of the biclique (the clauses) and a~part of the independent set (the variables); see~\cref{fig:main}.
A~parity mechanism incentivizes the second player (but not the first player) to deprive a~clause vertex of its literal vertices, after which the second player can win by playing the edge incident to $s$ corresponding to the unsatisfied clause.

The winning strategy for the first player initially simulates the winning strategy to satisfy all the clauses.
After this phase is completed, follows a~more complicated ``endgame'' where several parities and Sprague--Grundy values fortunately align.

\section{Preliminaries}\label{sec:prelim}

For any two integers $i, j$, we set $[i,j] := \{k \in \mathbb Z \mid i \leqslant k \leqslant j\}$, and $[i] := [1,i]$.
We denote by $\mathbb N_0$ the set of nonnegative integers.
We use the standard graph-theoretic notation.
For any graph $G$, $V(G)$ and $E(G)$ denote the vertex set and edge set, respectively, of~$G$.
If $S \subseteq V(G)$, then $G[S]$ denotes the subgraph of~$G$ induced by~$S$, and $G-S := G[V(G) \setminus S]$.
An \emph{independent set} is a~set of vertices that are pairwise nonadjacent. 

\subsection{Sprague--Grundy values}\label{subsec:value}

We denote by $\sg(G)$ the \emph{Sprague--Grundy value} (or \emph{value}, for short) of the \textsc{Arc Kayles} instance~$G$.
It is inductively defined as
\[\sg(G) := \mex\{ \sg(G-\{u,v\}) \mid uv \in E(G)\},\]
where $\mex$ is the minimum excluded nonnegative integer, that is, \[ \mex(S) := \min \{n \in \mathbb N_0 \mid n \notin S\}.\]
In particular, when $G$ is edgeless, $\sg(G)=0$.
Let us first highlight the following property of the value function.

\begin{observation}\label{obs:value}
  From any position of value $n \in \mathbb N_0$, and for any nonnegative integer $i < n$, there is a~move to a~position of value $i$, but there are no moves to a~position of value~$n$.
\end{observation}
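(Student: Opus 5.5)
The statement follows directly from the definition of $\sg$ as a $\mex$, so the plan is to unfold that definition. Let $G$ be a position with $\sg(G)=n$, and let $S := \{\sg(G-\{u,v\}) \mid uv \in E(G)\}$ be the set of values of the options of $G$. By definition, $n=\mex(S)=\min\{m \in \mathbb N_0 \mid m \notin S\}$. Both halves of the observation are read off from this equality.

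For the first half, fix a nonnegative integer $i<n$. Since $n$ is the \emph{least} nonnegative integer missing from $S$, every smaller nonnegative integer lies in $S$, so $i \in S$. By the definition of $S$, there is then an edge $uv \in E(G)$ with $\sg(G-\{u,v\})=i$. Playing $uv$ is a legal move from $G$ to a position of value $i$.

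For the second half, $n=\mex(S)$ is by definition not in $S$. So no edge $uv\in E(G)$ satisfies $\sg(G-\{u,v\})=n$, and no move from $G$ reaches a position of value $n$.

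The only point needing care is that $\sg$ is well defined, so that the options' values are available when the definition is applied. This holds because each move removes two vertices, so positions strictly decrease in size and the inductive definition is well founded, with edgeless graphs as base cases of value $0$. I do not expect any real obstacle here.
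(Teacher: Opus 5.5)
Your proof is correct and is exactly the intended argument. The paper states this observation without proof, as an immediate consequence of defining $\sg(G)$ as the $\mex$ of the option values, and that is precisely what you unfold: every $i < \mex(S)$ lies in $S$, and $\mex(S) \notin S$.
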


A~graph $G$ is a~\emph{$\P$-position} (implicitly for \textsc{Arc Kayles}) if the \textbf{p}revious player (the one not to move) has a~winning strategy, and it is an~\emph{$\N$-position} if the \textbf{n}ext player (the one to move) has a winning strategy.
In particular, an edgeless graph is a~$\P$-position, and a~single edge is an $\N$-position. 
As \textsc{Arc Kayles} on finite graphs has finite game trees, it is determined.
So every graph is either a~$\P$-position or an~$\N$-position.

We will assume that isolated vertices are automatically removed from the current position.
This does not affect the game, and simplifies the exposition.
For instance, there is now a~unique edgeless or terminal position: the empty graph.

The $\P$-positions are characterized by the Sprague--Grundy value.

\begin{lemma}\label{lem:charac-sg}
  A~graph $G$ is a~$\P$-position if and only if $\sg(G)=0$.
\end{lemma}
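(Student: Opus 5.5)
We argue by strong induction on the number of edges of $G$ (equivalently, on the height of the finite game tree). The claim to be proved is that $G$ is a~$\P$-position exactly when $\sg(G)=0$. Both notions are defined recursively over the same moves $G \mapsto G-\{u,v\}$ with $uv \in E(G)$, so we compare the two recursions directly. Removing the isolated vertices left behind changes neither the value nor the outcome, since such vertices take part in no edge.

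The base case is the edgeless (empty) graph. Its value is $\mex\emptyset=0$, and it is a~$\P$-position because the player to move has no legal move and loses under normal play.

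For the inductive step, let $G$ have at least one edge. Every option $G-\{u,v\}$ has strictly fewer edges, so by induction each option is a~$\P$-position if and only if its value is~$0$. We use the standard characterisation of outcomes under normal play: $G$ is an~$\N$-position if and only if some option of $G$ is a~$\P$-position. Indeed, the next player wins exactly when she can move to a~position from which the player then to move, her opponent, loses. Determinacy, noted above, guarantees that every position is exactly one of $\P$ or $\N$.

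The two directions now follow.
\begin{itemize}
\item If $\sg(G)=0$, then $0$ is absent from the set of option values. So every option has nonzero value and, by induction, is an~$\N$-position. Every move of the next player therefore hands a~winning position to the opponent, and $G$ is a~$\P$-position.
\item If $\sg(G)=n>0$, then by \cref{obs:value} (with $i=0$) there is a~move to an option of value~$0$. By induction that option is a~$\P$-position, so $G$ is an~$\N$-position.
\end{itemize}
Since every graph is either $\P$ or $\N$, this gives the equivalence. There is no real obstacle here. The only care needed is to take the induction over a~well-founded measure, such as the number of edges, and to invoke determinacy so that ``not $\N$'' means ``$\P$''.
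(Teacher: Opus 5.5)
Your proof is correct and rests on the same key fact as the paper's proof: by the definition of $\mex$, a position of value $0$ has only options of positive value, while a position of positive value has an option of value $0$. The paper uses this to give an explicit strategy (always move to a value-$0$ position, so values alternate between zero and positive until the empty graph, of value $0$, is reached), whereas you run the same observation as backward induction on the number of edges via the recursive $\P$/$\N$ characterization, which is the same argument in different form.
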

\begin{proof}
  The winning strategy is to play (when possible) an edge such that the resulting position has value 0.
  By definition of $\sg$, this is possible if and only if the current position has positive value.
  Indeed, if $\sg(G) > 0$, then $0 \in \{ \sg(G-\{u,v\}) \mid uv \in E(G)\}$; simply play any $uv \in E(G)$ such that $\sg(G-\{u,v\})=0$.
  On the other hand, if $\sg(G)=0$, every move $uv \in E(G)$ is such that $\sg(G-\{u,v\}) > 0$.

  Following this strategy, the positions alternate between having value 0 and having positive value.
  This is thus winning for the player who started if the initial position has positive value, since the unique terminal position has value 0.
  If the initial position has value 0, then after any move, the opponent can apply the same winning strategy.
\end{proof}

Given two graphs $G_1, G_2$, we denote by $G_1 \uplus G_2$ their disjoint union.
Given two numbers $a, b \in \mathbb N_0$, we denote by $a \oplus b$ the integer whose binary expansion is the bitwise xor of the binary expansion of~$a$ and the binary expansion of~$b$.
For instance, $5 \oplus 6 = 3$, because the bitwise xor of $101$ and $110$ is $011$.
Both $\uplus$ and $\oplus$ are associative.

The value of a~graph is determined by the values of its connected components. 

\begin{lemma}\label{lem:union}
  For all graphs $G_1, \ldots, G_h$, $\sg(G_1 \uplus \cdots \uplus G_h) = \sg(G_1) \oplus \cdots \oplus \sg(G_h)$.
\end{lemma}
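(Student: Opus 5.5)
We reduce to the case $h=2$ by induction on $h$. The case $h=1$ is trivial. For $h \geqslant 3$, associativity of $\uplus$ gives $G_1 \uplus \cdots \uplus G_h = (G_1 \uplus \cdots \uplus G_{h-1}) \uplus G_h$. Applying the two-graph case and then the induction hypothesis, together with associativity of $\oplus$, yields the claim. So it suffices to show $\sg(G_1 \uplus G_2) = \sg(G_1) \oplus \sg(G_2)$.

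For two graphs, we argue by induction on $|E(G_1)|+|E(G_2)|$ (or on the total number of vertices). The base case is when both graphs are edgeless. Then $G_1 \uplus G_2$ is edgeless and both sides equal $0$.

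For the inductive step, the key structural remark is that every edge of $G_1 \uplus G_2$ lies entirely in one component part. Hence the positions reachable in one move are exactly $(G_1-\{u,v\}) \uplus G_2$ for $uv \in E(G_1)$ and $G_1 \uplus (G_2-\{u,v\})$ for $uv \in E(G_2)$. By induction, their values are $\sg(G_1-\{u,v\}) \oplus \sg(G_2)$ and $\sg(G_1) \oplus \sg(G_2-\{u,v\})$, respectively.

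Set $a=\sg(G_1)$, $b=\sg(G_2)$ and $n=a\oplus b$. We must show two things.

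First, no option has value $n$. If $\sg(G_1-\{u,v\}) \oplus b = a \oplus b$, then cancelling $b$ gives $\sg(G_1-\{u,v\}) = a$. This contradicts \cref{obs:value}, and the symmetric argument handles moves in $G_2$.

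Second, every $m<n$ is the value of some option. This is the only nontrivial step, namely the standard Nim bit argument. Let $d = m \oplus n$, and let $k$ be the position of the most significant $1$ bit of $d$. Since $m<n$, bit $k$ of $n$ is $1$ and bit $k$ of $m$ is $0$. Because $n = a \oplus b$, exactly one of $a$ and $b$ has bit $k$ set; say it is $a$, the other case being symmetric. Then $a' := a \oplus d$ agrees with $a$ above bit $k$ and has bit $k$ cleared, so $a' < a$. By \cref{obs:value}, there is an edge $uv \in E(G_1)$ with $\sg(G_1-\{u,v\}) = a'$. The resulting option has value $a' \oplus b = a \oplus d \oplus b = n \oplus d = m$.

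Combining the two facts, the mex of the option values is exactly $n$, which completes the induction.

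One small point is that the paper automatically removes isolated vertices. This is harmless here, since isolated vertices carry no edges and do not change any value.
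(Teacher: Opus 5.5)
Your proof is correct and takes essentially the same route as the paper. You reduce to $h=2$, induct on the size of $G_1 \uplus G_2$, exclude the value $\sg(G_1)\oplus\sg(G_2)$ via \cref{obs:value}, and reach every smaller value with the standard Nim bit argument; your $a' = a \oplus d$ is exactly the paper's $j$.
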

\begin{proof}
  It is enough to show the statement when $h=2$, by associativity of $\uplus$ and $\oplus$.
  We show the lemma by induction on $|V(G_1)|+|V(G_2)|$.
  If both $G_1$ and $G_2$ are empty, then $G_1 \uplus G_2$ is empty, and $\sg(G_1) \oplus \sg(G_2) = 0 \oplus 0 = 0$, which is indeed equal to $\sg(G_1 \uplus G_2)$.

  We now assume that the lemma's equality holds in every strict induced subgraph of $G_1 \uplus G_2$.
  In particular, for every $uv \in E(G_1)$, \[\sg((G_1 - \{u,v\}) \uplus G_2) = \sg(G_1 - \{u,v\}) \oplus \sg(G_2),\] and for every $uv \in E(G_2)$, \[\sg(G_1 \uplus (G_2 - \{u,v\})) = \sg(G_1) \oplus \sg(G_2 - \{u,v\}).\]
  Thus the set of reachable values after playing one move in $G_1 \uplus G_2$ is \[ S := \{\sg(G_1 - \{u,v\}) \oplus \sg(G_2) \mid uv \in E(G_1)\} \cup \{\sg(G_1) \oplus \sg(G_2 - \{u,v\}) \mid uv \in E(G_2)\}.\]

  We first claim that $\sg(G_1) \oplus \sg(G_2) \notin S$.
  Indeed, this would imply that either $\sg(G_1 - \{u,v\}) \oplus \sg(G_2) = \sg(G_1) \oplus \sg(G_2)$ for some $uv \in E(G_1)$, hence $\sg(G_1 - \{u,v\}) = \sg(G_1)$, or symmetrically that $\sg(G_2 - \{u,v\}) = \sg(G_2)$ for some $uv \in E(G_2)$.
  This cannot happen by~\cref{obs:value}.

  We finally claim that every nonnegative integer $i < \sg(G_1) \oplus \sg(G_2)$ satisfies $i \in S$.
  We assume that $\sg(G_1) \oplus \sg(G_2) > 0$, since otherwise we are done.
  Let $k$ be the index of the heaviest (nonzero) bit of $\sg(G_1) \oplus \sg(G_2)$ (counting from 1 for the lightest bit), and $k' \leqslant k$ be the index of the heaviest bit on which $\sg(G_1) \oplus \sg(G_2)$ has a~1 whereas $i$ has a~0.
  By assumption, exactly one of $\sg(G_1), \sg(G_2)$, say, $\sg(G_1)$, has a~1 at index~$k'$.
  By~\cref{obs:value}, one can thus play in~$G_1$ a~move that leads to a~position of value $j$, such that the binary expansion of~$j$ agrees with that of~$\sg(G_1)$ on the indices larger than $k'$, has a~0 at index~$k'$, and the pointwise xor of $\sg(G_2)$ and $i$ at indices smaller than~$k'$.
  Then $j < \sg(G_1)$ and $j \oplus \sg(G_2)=i$.
  Hence $i \in S$, as required.
\end{proof}

\subsection{The \textsc{Pos CNF} game, outcome-equivalent games}\label{subsec:pos-cnf}

 To show the main result, we will reduce from the \textsc{Pos CNF} game.
  Given an $n$-variable CNF formula $\varphi = C_1 \land \ldots \land C_m$ where every clause $C_j$ has only positive literals, two players \true and \false alternate, with \true playing first, claiming an unclaimed variable.
  \true sets her variables to true, and \false sets his variables to false.
  \true wins if all clauses of $\varphi$ are eventually satisfied, and loses otherwise. 
  Deciding if \true has a~winning strategy, the \textsc{Pos CNF} problem, is indeed \PSPACE-hard~\cite{Schaefer78}.
  Equivalently, this is the Breaker-first \textsc{Maker--Breaker} game on the hypergraph whose vertices are the variables and whose hyperedges are the clauses, with \true playing Breaker and \false playing Maker.

  We say that two combinatorial games are \emph{outcome-equivalent} if the first player (resp. second player) has a~winning strategy from the same set of positions.
  We show the intuitive fact that passing or playing useful moves for the opponent never helps in \textsc{Maker--Breaker}.
  
\begin{lemma}\label{lem:pb-monotonicity}
   The \textsc{Pos CNF} game is outcome-equivalent to its variant where players can additionally pass or assign either truth value to the selected variable.
\end{lemma}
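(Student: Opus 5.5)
We prove that a player who wins the \textsc{Pos CNF} game still wins the extended variant, in which either player may also pass or assign either truth value. The reverse direction needs no argument for the strategies of the winner. So we fix a position (a partial assignment plus whose turn it is) and suppose \true has a winning strategy there in the standard game. We show she also wins in the extended variant; the case of \false is symmetric. As a consequence, the winner's identity is the same in both games.

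The key tool is a monotonicity statement. Order partial assignments by letting $\alpha \preceq \beta$ if $\beta$ is obtained from $\alpha$ by setting some additional variables to true, or by flipping some false variables to true. Then, by induction on the number of unassigned variables, we prove: if \true wins from $\alpha$ with a given player to move, she wins from any $\beta \succeq \alpha$ that has at most the same set of unassigned variables, with the same player to move. We also prove the companion statement that ``an extra move for \true'' never hurts her. Both follow from a strategy-stealing simulation. \true keeps an \emph{imagined} standard game on $\alpha$ alongside the real one. Any variable that is true in the real game but not in the imagined one is treated as an extra free token. When her imagined strategy tells her to claim such a variable, she instead claims an arbitrary unassigned variable, or passes if the variant allows it, and adds that variable to her stock of extra tokens. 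Since positive clauses are monotone in the true variables, the real final assignment dominates the imagined final assignment, so all clauses remain satisfied.

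With this in hand, we handle the deviations of the extended variant inside the simulation.
\begin{itemize}
\item \textbf{\false passes.} \true pretends \false claimed some unassigned variable $x$, which makes the imagined position worse for her than reality, and answers according to her strategy; $x$ is left as something she may later claim for free.
\item \textbf{\false sets a variable to true.} This only helps \true: the real assignment dominates the imagined one in which he set it to false.
\item \textbf{\true's own moves.} She never needs to pass or set a variable to false, so she simply plays her standard strategy via the simulation.
\end{itemize}
The symmetric argument handles \false, with ``false'' in place of ``true'' and clause falsification monotone in the false variables.

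The main obstacle is bookkeeping the imagined game so that it stays a legal standard \textsc{Pos CNF} play, with alternating moves and distinct variables. The problematic case is when the imagined strategy requests a variable that is already assigned in reality, or when a phantom variable assigned to \false in the imagined game is later really claimed by \true. We will maintain the invariant that the set of variables true in reality contains those true in the imagined game, and that the set of variables false in reality is contained in those false in the imagined game. Each step then reduces to one of the cases above: claim an arbitrary free variable, or pass. At the end of the game, any clause satisfied in the imagined game is satisfied in reality. We will also check the parity issue: the imagined game may end earlier than the real one, and any leftover real moves only add true variables for \true, or are harmless to \false under the invariant.
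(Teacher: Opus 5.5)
Your proposal is correct and is essentially the paper's argument: the winner follows her standard winning strategy, extra claimed variables never hurt her, and whenever the strategy asks for a variable she already owns she claims an arbitrary unassigned one instead. The only difference is how passes are absorbed. The paper answers a pass (or a gift) by claiming an arbitrary variable, so every deviation just adds extra tokens for the winner, and your ordering $\preceq$ (with shrinking unassigned set) already suffices. Your phantom-claim treatment instead leaves the real position with more unassigned variables than the imagined one, so it is the two-sided invariant of your last paragraph, not the monotonicity statement as you formulated it, that carries the proof.
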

\begin{proof}
  We use the \textsc{Maker--Breaker} phrasing, where Breaker plays the first move.
  Maker (\false) has to play all the elements of some hyperedge, in order to win.
  Now, players can pass or can claim an element for the benefit of their opponent.

  If the player $P$ to move has a~winning strategy in \textsc{Maker--Breaker}, then $P$ simply follows this strategy.   

  Now consider any position such that the player $P$ not to move has a~winning strategy in \textsc{Maker--Breaker}.
  If the opponent plays a~legal move of \textsc{Maker--Breaker}, then $P$ again follows the winning strategy.
  If the opponent passes or claims an element for $P$, then $P$ claims any unclaimed element.
  In the resulting position, $P$ has one or two extra claimed vertices, while the opponent has no extra claimed vertices.
  So, $P$ still has a~winning strategy.
  Indeed, when the winning strategy without the extra claimed vertices says that $P$ should play one of these vertices, $P$ plays any unclaimed vertex instead.  
\end{proof}

  
\section{Hardness of Arc Kayles}

We start with a~lemma determining the Sprague--Grundy value of any biclique neighboring an independent set, with every vertex of the biclique having at least one private neighbor (that is, in that case, a~pendant neighbor).

\begin{lemma}\label{lem:biclique}
  Let $H$ be any graph where $V(H)$ is partitioned into two sets $B, I$ such that:
  \begin{compactitem}
    \item $H[B]$ is an induced biclique isomorphic to $K_{a,b}$ for some nonnegative integers $a, b$,
    \item $I$ is an independent set, and
    \item every vertex of~$B$ has a~neighbor in~$I$ that has degree 1 (in~$H$).
  \end{compactitem}
  Then \[\sg(H) = g(a,b) := {(a+b) \bmod 2}\,+\,2 \cdot ({\min(a,b) \bmod 2}).\]
\end{lemma}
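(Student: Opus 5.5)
We argue by strong induction on $|V(H)|$, or equivalently on $a+b$ together with the structure of $I$. The key observation is that every move in $H$ falls into one of three types, and in each case the resulting position decomposes, by \cref{lem:union}, into a graph of the same kind with smaller parameters plus possibly some isolated vertices, which are removed automatically. Let $L$ and $R$ be the two sides of the biclique, with $|L|=a$ and $|R|=b$.

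\begin{compactitem}
\item Type (i): an edge $xy$ with $x \in L$ and $y \in R$. After removing $x,y$, every remaining vertex of $B$ keeps its pendant neighbor. Vertices of $I$ may become isolated, and vertices of $I$ may lose neighbors but still lie in $I$. The result is a graph of the same type with parameters $(a-1,b-1)$.
\item Type (ii): an edge $xz$ with $x\in L$ and $z\in I$. The vertex $z$ is adjacent only to $B$, and the other $B$-vertices keep their private pendant neighbors, since a pendant neighbor of some $w\neq x$ has $w$ as its unique neighbor and so $z$ is not it. The result has parameters $(a-1,b)$. Symmetrically, type (iii), an edge from $R$ to $I$, gives $(a,b-1)$.
\end{compactitem}

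Since every $B$-vertex has an $I$-neighbor, moves of types (ii) and (iii) are available whenever $a\ge1$, respectively $b\ge1$. Type (i) is available when $a,b\ge1$. One subtlety: if $a=0$ or $b=0$ the graph is a disjoint union of stars, and a star has value $1$. The formula gives $g(a,0)=a \bmod 2$, which matches \cref{lem:union} directly. Also, when $\min(a,b)=0$ the "biclique" has no edges, and the argument goes through unchanged.

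By induction, the set of values reachable from $H$ is $\{g(a-1,b) : a\ge1\}\cup\{g(a,b-1) : b\ge1\}\cup\{g(a-1,b-1) : a,b\ge1\}$. This depends only on $(a,b)$, so it remains to check that its mex equals $g(a,b)$. Assume without loss of generality $a\le b$ and $a\ge1$; the case $a=0$ is handled above. We split by the parities of $a$ and $b$:

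\begin{compactitem}
\item $a$ even, $b$ even: target value $0$. The three options give $g(a-1,b)=1+2=3$ (since $\min=a-1$ is odd), $g(a,b-1)=1$ when $a\le b-1$ and $3$ when $a=b$, and $g(a-1,b-1)=0+2=2$. In every case $0$ is missing, so the mex is $0$.
\item $a$ odd, $b$ odd: target $2$. We get $g(a-1,b)=1$, $g(a,b-1)=1+2=3$ when $a<b$ and $1$ when $a=b$, and $g(a-1,b-1)=0$. So $0,1$ are reachable and $2$ is not, giving mex $2$.
\item $a$ odd, $b$ even: target $3$. We get $g(a-1,b)=0$, $g(a,b-1)=0+2=2$, and $g(a-1,b-1)=1$ (here $\min=a-1$ is even, because $a\le b-1$ when $b$ is even). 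The reachable set is $\{0,1,2\}$, so the mex is $3$.
\item $a$ even, $b$ odd: target $1$. We get $g(a-1,b)=2$, $g(a,b-1)=0$, and $g(a-1,b-1)=1+2=3$. The reachable set is $\{0,2,3\}$, so the mex is $1$.
\end{compactitem}

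The main obstacle is not the arithmetic but the structural claim that each move preserves the hypotheses. In particular we need that removing $x$ and its chosen $I$-neighbor never removes some other $B$-vertex's only pendant neighbor, which holds because pendant vertices have a unique neighbor. We also need that $I$-vertices of degree $\ge2$ that lose neighbors cause no trouble, since the hypotheses only require $I$ to be independent.
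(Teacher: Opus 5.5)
Your proof takes the paper's route. The paper also splits moves into biclique edges and edges into $I$ (it calls the latter ``vertex moves'' on $K_{a,b}$), reduces by induction to $\sg(H)=\mex\{g(a-1,b),g(a,b-1),g(a-1,b-1)\}$, and checks the mex. Your move classification is right, and so are your four parity cases. The paper compresses those cases into one observation: for $a\neq b$ the four numbers $g(a,b),g(a-1,b),g(a,b-1),g(a-1,b-1)$ are exactly $\{0,1,2,3\}$, and $a=b$ is handled separately.

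There is, however, a false statement in your treatment of $\min(a,b)=0$: the graph need not be a disjoint union of stars. The hypothesis only gives each $B$-vertex \emph{some} pendant neighbor. Other vertices of $I$ may be adjacent to several $B$-vertices, as you note yourself in your last paragraph. For example, take $a=0$ and $B=\{w_1,w_2\}$, with $I=\{p_1,p_2,z\}$, where $p_i$ is pendant at $w_i$ and $z$ is adjacent to both. This graph is a path on five vertices, not a union of stars.

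This case genuinely arises. It occurs inside your induction, e.g.\ whenever some vertex of $I$ has two neighbors on the same side. It also occurs in the reduction, which invokes the lemma with $r=0$ while some $f_i$ are adjacent to several $b_j$.

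The repair is what your remark ``the argument goes through unchanged'' points to, and it is what the paper does. If $a=0<b$, every edge has exactly one endpoint in $B$. By your type (iii) analysis, each move yields a graph satisfying the hypotheses with parameters $(0,b-1)$. So the only option value is $g(0,b-1)=(b-1)\bmod 2$, giving $\sg(H)=b\bmod 2=g(0,b)$; equivalently, every play lasts exactly $b$ moves. If you replace the star claim with this, the proof is complete.
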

\begin{proof}
  The assumptions are such that the \textsc{Arc Kayles} game on~$H$ can be thought of as playing on $K_{a,b}$ the more permissive game where in addition to playing an edge, one can play a~vertex (in which case, only the played vertex is removed).
  In the latter game on $K_{a,b}$, isolated vertices (which arise if a~side of the biclique becomes empty) are \emph{not} removed.

  Indeed, while a~vertex $v \in B$ remains, one can simulate playing $v$ by playing an edge linking $v$ to a~neighbor of~$v$ in $I$ with degree 1.
  This neighbor cannot disappear before~$v$.
  Furthermore, every edge of the biclique is playable as a~normal edge move, and every other edge has exactly one endpoint in~$B$ and corresponds to a~vertex move.

  We show the lemma by induction on $a+b$.
  If $a=0$ or $b=0$, every move deletes exactly one vertex of~$B$, and the other vertices of~$B$ remain non-isolated due to their private neighbor in~$I$ (one with degree~1).
  Thus $\sg(H)= b \mod 2 = g(0,b)$ when $a=0$, and $\sg(H) = a \mod 2 = g(a,0)$ when $b=0$.
  Indeed, when the remaining number of vertices in~$B$ is odd, every legal move wins (thus leads to a~position of value 0), and when this number is even the position is losing.

  By the first paragraph and the induction assumption, when $a, b > 0$,
  \[\sg(H) = \mex \{g(a-1,b), g(a,b-1), g(a-1,b-1)\}.\]
  We show that this minimum excluded value is indeed equal to~$g(a,b)$.

  If $a \neq b$, we claim that \[\{g(a,b), g(a-1,b), g(a,b-1), g(a-1,b-1)\}=\{0,1,2,3\},\]
  which is sufficient to conclude.
  Say, without loss of generality, that $a<b$.
  Then, compared to the binary expansion of $g(a,b)$, the numbers $g(a-1,b), g(a,b-1), g(a-1,b-1)$ respectively flip both bits, flip the first bit, flip the second bit.

  If $a=b$, then $\{g(a-1,b), g(a,b-1), g(a-1,b-1)\}=\{0,1\}$ when $a$ is odd, and $\{g(a-1,b), g(a,b-1), g(a-1,b-1)\}=\{2,3\}$ when $a$ is even.
  Furthermore, $g(a,b)=g(a,a)=2(a \mod 2)$, so we indeed have $g(a,b)=2$ in the former case, and $g(a,b)=0$ in the latter.
\end{proof}

We now show the main result.

\begin{reptheorem}{thm:main}
  \textsc{Arc Kayles} is \PSPACE-complete.
\end{reptheorem}

\begin{proof}
  \textsc{Arc Kayles} is in \PSPACE.
  Indeed, a~game on~$G$ lasts at~most $N/2$ moves with $N := |V(G)|$, and a~single position can be described with $\Oh(N^2)$ bits.
  Thus a~depth-first search of the game tree, only keeping the positions from the root to the current position, solves the game in space $\Oh(N^3)$. 

  We reduce from the \PSPACE-hard \textsc{Pos CNF} problem~\cite{Schaefer78}.
  Let $X=\{x_1, \ldots, x_n\}$ be the variable set of any instance $\varphi = C_1 \land \cdots \land C_m$.
  We further assume that $m$ is odd.
  We simply duplicate one clause if this is not the case, as this does not change which player has a~winning strategy.

  \medskip
  \textbf{Construction.}
  We set \[K := 4n+6~~\text{and}~~R := m+2n+2.\]
  
  We build a~graph $G_\varphi := G$ with vertex set:
  \[\{s,t\} \cup \{a_j, b_j \mid j \in [m]\} \cup \{v_i, t_i \mid i \in [R]\} \cup \{f_i \mid i \in [n]\} \cup \{y_i, z_i \mid i \in [K]\},\]
  and edge set:
  \[\{st\} \cup \{sa_j, a_jb_j \mid j \in [m]\} \cup \{b_jv_i \mid j \in [m], i \in [R]\} \cup \{v_it_i \mid i \in [R]\} \cup \{v_if_i \mid i \in [n]\}\]
  \[\cup~\{b_jf_i \mid x_i \in C_j\} \cup \{sy_i, y_iz_i \mid i \in [K]\}.\]
  This finishes the construction; see~\cref{fig:main}.
  Graph $G$ has $2+2m+2K+2R+n = \Oh(n+m)$ vertices, and can be built in polynomial time.

\begin{figure}[!ht]
\centering
\begin{tikzpicture}[
  vertex/.style={circle,fill,inner sep=1.3pt},
  group/.style={draw,rectangle,rounded corners,
                inner xsep=4pt,inner ysep=3pt,
                minimum height=7.5mm},
  dimension/.style={<->,>=Stealth,thin},
  pass/.style={draw=blue,line width=1.2pt},
  assignment/.style={draw=red,line width=1.2pt},
  every node/.style={font=\small},
  line width=.4pt
]

\node[vertex,label=left:$s$] (s) at (.3,0) {};
\node[vertex,label=right:$t$] (t) at (.3,.95) {};

\foreach \i/\x in {1/-3.4,2/-2.8,j/-1.6}{
  \node[vertex] (a\i) at (\x,-1.45) {};
  \node[vertex] (b\i) at (\x,-2.75) {};
}
\foreach \i/\x in {1/2.8,2/3.4,K/4.6}{
  \node[vertex] (y\i) at (\x,-1.45) {};
  \node[vertex] (z\i) at (\x,-2.75) {};
}
\foreach \y in {-1.45,-2.75}{
  \node at (-2.2,\y) {$\cdots$};
  \node at (4,\y) {$\cdots$};
}

\foreach \i/\x in {1/-5.2,2/-4.6,i/-4.0,n/-2.6,npone/-1.2,R/.4}{
  \node[vertex] (v\i) at (\x,-4.5) {};
  \node[vertex] (t\i) at (\x,-6.25) {};
}
\node at (-3.3,-4.5) {$\cdots$};
\node at (-3.3,-6.25) {$\cdots$};
\node at (-.4,-4.5) {$\cdots$};
\node at (-.4,-6.25) {$\cdots$};

\foreach \i/\x in {1/2.5,2/3.1,i/3.7,n/4.9}
  \node[vertex] (f\i) at (\x,-6.25) {};
\node at (4.3,-6.25) {$\cdots$};

\foreach \name/\first/\last in {A/a1/aj,B/b1/bj,V/v1/vR,T/t1/tR}
  \node[group,fit=(\first)(\last),label=left:$\name$] (\name) {};
\foreach \name/\first/\last in {Y/y1/yK,Z/z1/zK,F/f1/fn}
  \node[group,fit=(\first)(\last),label=right:$\name$] (\name) {};

\node[right=2pt,inner sep=0pt] at (bj) {$b_j$};

\node[above=2pt,inner sep=0pt] at (vi) {$v_i$};
\node[above=2pt,inner sep=0pt] at (vn) {$v_n$};
\node[above=2pt,inner sep=0pt] at (vnpone) {$v_{n+1}$};
\node[above=2pt,inner sep=0pt] at (vR) {$v_R$};

\node[below=2pt,inner sep=0pt] at (ti) {$t_i$};
\node[below=2pt,inner sep=0pt] at (tn) {$t_n$};
\node[below=2pt,inner sep=0pt] at (tnpone) {$t_{n+1}$};
\node[below=2pt,inner sep=0pt] at (tR) {$t_R$};

\node[below=2pt,inner sep=0pt] at (fi) {$f_i$};
\node[below=2pt,inner sep=0pt] at (fn) {$f_n$};

\begin{scope}[on background layer]
  \draw (s)--(t);

  \foreach \i in {1,2,j}
    \draw (s)--(a\i)--(b\i);

  \foreach \i in {1,2,K}{
    \draw (s)--(y\i);
    \draw[pass] (y\i)--(z\i);
  }

  \draw[line width=1pt] (B.south)--(V.north);

  \foreach \i in {1,2,i,n}{
    \draw[assignment] (v\i)--(t\i);
    \draw[assignment] (v\i)--(f\i);
  }

  \draw[pass] (vnpone)--(tnpone);

  \draw[pass] (vR)--(tR);

  \draw (bj) .. controls +(0.9,-1.35) and +(0,1.65) ..
    node[pos=.5,above,sloped,inner sep=3pt] {$x_i\in C_j$} (fi);
\end{scope}

\draw[dimension] ([yshift=11mm]A.west) --
  node[above] {$m$} ([yshift=11mm]A.east);

\draw[dimension] ([yshift=-11mm]t1.west) --
  node[below] {$n$} ([yshift=-11mm]tn.east);

\draw[dimension] ([yshift=-7.5mm]Z.west) --
  node[below] {$K$} ([yshift=-7.5mm]Z.east);
\draw[dimension] ([yshift=-16mm]T.west) --
  node[below] {$R$} ([yshift=-16mm]T.east);
\draw[dimension] ([yshift=-7.5mm]F.west) --
  node[below] {$n$} ([yshift=-7.5mm]F.east);

\end{tikzpicture}
\caption{The graph $G_\varphi$.
  Pass moves are blue and assignment moves are red.
  The thick black edge represents the biclique between $B$ and $V$.}
\label{fig:main}
\end{figure}

 We further set $A := \{a_j \mid j \in [m]\}$, $B := \{b_j \mid j \in [m]\}$, $V := \{v_i \mid i \in [R]\}$, $T := \{t_i \mid i \in [R]\}$, $F := \{f_i \mid i \in [n]\}$, $Y := \{y_i \mid i \in [K]\}$, and $Z := \{z_i \mid i \in [K]\}$.
  All these sets are independent sets.
  While vertices are removed by the game (by being endpoints of a~played edge or becoming isolated), we keep denoting by $A, B, V, T, F, Y, Z$ their respective intersections with the current vertex set.
  
  For every $i \in [n]$, playing the edge $v_it_i$ is interpreted as setting $x_i$ to true, while playing $v_if_i$ is interpreted as setting $x_i$ to false.
  We call any move $v_it_i$ or $v_if_i$, with $i \in [n]$, an \emph{assignment move}.
  We call any move $v_it_i$, with $i \in [n+1,R]$, or $y_iz_i$, with $i \in [K]$, a~\emph{pass move}. 

  In any given position, we automatically remove the isolated vertices (this does not affect the game), and denote by $r \leqslant R$ the number of remaining vertices in $V$, and by $k \leqslant K$ the number of remaining vertices in $Y$.    
  
  \begin{claim}\label{clm:deviation}
    While $r \geqslant m$ and $k \geqslant 1$, the first move that is neither an assignment move nor a~pass move loses, except if it is $sa_j$ and all the literals of~$C_j$ in $F$ are gone.  
  \end{claim}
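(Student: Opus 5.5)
The plan is to show that after any such first deviation, the player who did \emph{not} deviate can always reach a position of value $0$. The position is then a $\P$-position by \cref{lem:charac-sg}, and this player wins. The key observation is that once $s$ is gone, the graph splits into simple components whose values we know.

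\textbf{Values once $s$ is gone.} Suppose $s$ is removed and every remaining $b_j$ still has its pendant $a_j$ (with $s$ gone, $a_j$ has degree~1). Then $t$ becomes isolated and disappears. Each remaining $y_iz_i$ is an isolated edge, of value~$1$. What is left is the biclique between $B$ and $V$ together with the independent set $A\cup T\cup F$. Here every $b_j$ has the pendant neighbour $a_j$, and every remaining $v_i$ still has $t_i$, which has degree~1. The vertices of $F$ may have larger degree, but \cref{lem:biclique} only asks that $I$ be independent. So, writing $\beta:=|B|$, \cref{lem:biclique,lem:union} give that the position has value
\[ g(\beta,r)\oplus (k' \bmod 2), \]
where $k'$ is the number of remaining $y_iz_i$ edges. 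Throughout the claim $r \geqslant m \geqslant \beta$ holds, possibly with $r$ decreased by one and $\beta=m-1$. Hence $\min(\beta,r)=\beta$, and the second bit of this value is $\beta \bmod 2$.

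\textbf{Case analysis on the first deviation.}
\begin{compactitem}
\item \emph{Deviation $st$ or $sy_i$.} Then $s$ is gone and no $a_j$ was touched, so $\beta=m$. Since $m$ is odd, the second bit is $1$ and the value is nonzero. The deviator has thus moved to an $\N$-position and loses.
\item \emph{Deviation $a_jb_j$, $b_jv_i$, or $b_jf_i$.} These remove exactly one vertex $b_j$ and leave $s$ in place. In the last two cases $a_j$ is left hanging on $s$ and will become isolated. The opponent replies either $st$ or $sy_{i'}$; the latter is available since $k \geqslant 1$ and $s$ survives. After this reply, $s$ is gone and $\beta=m-1$ is even, so the second bit is $0$. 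The two replies leave $k$ or $k-1$ pass edges $y_iz_i$, so they differ exactly in the parity bit, and one of them yields value $0$. In the case $b_jv_i$ we should double-check $r-1\geqslant m-1$, which holds.
\item \emph{Deviation $sa_j$ while some literal vertex $f_i$ of $C_j$ remains.} Now $s$ is gone and $b_j$ has lost its pendant, so \cref{lem:biclique} cannot be applied directly. The opponent instead plays on $b_j$, choosing between $b_jf_i$ and $b_jv_{i'}$. Both remove $b_j$ and leave all other $b$'s with pendants. They give values $g(m-1,r)\oplus(k \bmod 2)$ and $g(m-1,r-1)\oplus(k \bmod 2)$ respectively, where $k$ pass edges remain. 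These have second bit $0$ and opposite first bits, so one of them is $0$.
\end{compactitem}
This covers all edges of $G$ other than assignment and pass moves.

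\textbf{Expected obstacle.} The delicate points are bookkeeping checks. First, after each reply the hypotheses of \cref{lem:biclique} must really hold: every remaining $v_i$ must still have a degree-1 neighbour $t_i$, including those $v_i$ whose $f_i$ was removed. Second, the isolated vertices, namely $t$, stray $t_i$ and $a_j$, and $f$'s that lost all neighbours, must not distort the count. Third, the minimum in $g$ must be attained on the $B$ side, which is where $r\geqslant m$ is used.
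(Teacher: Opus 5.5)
Your proof is correct and follows the paper's argument essentially verbatim. It uses the same three cases: $st$ or $sy_i$ yields value $g(m,r)\oplus(\cdot)\in\{2,3\}$; a move touching some $b_j$ is answered by whichever of $st$, $sy_{i'}$ gives value $0$; and $sa_j$ is answered by whichever of $b_jf_i$, $b_jv_{i'}$ gives value $0$. The values are computed via \cref{lem:biclique,lem:union}, with $\min(\beta,r)=\beta$ because $r\geqslant m$. Factoring out the common ``$s$ is gone'' computation is only a presentational difference.
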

  \begin{claimproof}
    We use \cref{lem:union,lem:biclique} to show that deviating from assignment or pass moves loses.

    \subparagraph*{An edge $\bm{st}$ or $\bm{sy_i}$ is played.}
    Such a move (any move that removes $s$) disconnects $Y \cup Z$ from $A \cup B \cup V \cup T \cup F$.
    We can thus apply \cref{lem:biclique} to the latter connected component, with the biclique $(B,V)$ and the independent set $A \cup T \cup F$.
    Note indeed that all the previous assignment and pass moves maintain that every vertex $b_j \in B$ has a~private neighbor $a_j \in A$, and every (remaining) vertex $v_i \in V$ has a~private neighbor $t_i \in T$.

    Therefore the value of the resulting graph is $g(m,r) \oplus (k \mod 2)$ if $st$ was played, and $g(m,r) \oplus ((k-1) \mod 2)$ if some $sy_i$ was played.
    Indeed, the subgraph induced by the remaining vertices of~$Y \cup Z$ consists of $k$ (resp.~$k-1$) isolated edges, and the value $k \mod 2$ (resp.~$(k-1) \mod 2$) follows from~\cref{lem:union}.

    Since $m \leqslant r$ and $m$ is odd, $\min(m,r) \mod 2=1$, so $g(m,r) \in \{2,3\}$.
    Thus \[g(m,r) \oplus 0,~g(m,r) \oplus 1 \in \{2,3\},\] so the resulting graph is an~$\N$-position.

    \subparagraph*{An edge incident to some $\bm{b_j \in B}$ is played.}
    Such a~move can be $a_jb_j$, $b_jv_i$ for some remaining $v_i \in V$, or $b_jf_i$ for some remaining $f_i \in F$ (such that $x_i \in C_j$).

    In all cases, the smallest side of the remaining biclique $(B,V)$ has size $m-1$, which is even.
    And by the analysis of the previous case, if $st$ is played next, the resulting value is \[g(m-1,r') \oplus (k \mod 2),\]
    whereas if $sy_i$ is played next, the resulting value is \[g(m-1,r') \oplus ((k-1) \mod 2),\]
    where $r' \in \{r-1,r\}$.
    In any case, $g(m-1,r') \in \{0,1\}$ since $\min(m-1,r') \mod 2 = (m-1) \mod 2 = 0$.
    So exactly one of $st, sy_i$ is a winning move: it leads to a~position with value~0.
    There is at least one remaining edge $sy_i$ by assumption that $k \geqslant 1$.

    \subparagraph*{An edge $\bm{sa_j}$ is played while some edge $\bm{b_jf_i}$ remains.}
    Against this move, both replies $b_jv_h$ (for some remaining $v_h \in V$) and $b_jf_i$ allow one to then invoke~\cref{lem:biclique}.
    These replies lead to positions of value $g(m-1,r-1) \oplus (k \mod 2)$ and $g(m-1,r) \oplus (k \mod 2)$, respectively.
    Since $\min(m-1,r-1)=m-1$ and $m-1$ is even,
    \[\{g(m-1,r-1), g(m-1,r)\}=\{(m+r) \mod 2,~(m+r-1) \mod 2\} = \{0,1\}.\]
    Therefore exactly one of $b_jv_h, b_jf_i$ is a~winning reply (it leads to a~$\P$-position).

  \medskip  
  This concludes the proof of the claim since every other edge is an assignment move, a~pass move, or is covered by the exception.  
  \end{claimproof}

  \begin{claim}\label{clm:empty-clause}
    If only assignment and pass moves have been played, $r \geqslant m$, and a~vertex $b_j \in B$ has no neighbor left in~$F$, then the move $sa_j$ is winning if and only if $r+k$ is even.
  \end{claim}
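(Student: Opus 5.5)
The plan is to compute the value of the position right after $sa_j$ is played and show that it is $0$ exactly when $r+k$ is even. Playing $sa_j$ removes $s$ and $a_j$, so $t$ becomes isolated and is discarded. The remaining vertices of $Y \cup Z$ form $k$ isolated edges $y_iz_i$, whose total value is $k \bmod 2$ by \cref{lem:union}. The rest is the component $H'$ on $B \cup V \cup T \cup F \cup (A\setminus\{a_j\})$. Since only assignment and pass moves have been played, every $b_{j'}$ with $j'\neq j$ still has its pendant $a_{j'}$, and every remaining $v_i$ still has its pendant $t_i$. The only defect is $b_j$: it has lost $a_j$ and, by assumption, has no neighbor left in $F$, so its only neighbors are the $r$ vertices of $V$. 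By \cref{lem:union} and \cref{lem:charac-sg}, $sa_j$ is winning if and only if $\sg(H') \oplus (k \bmod 2)=0$.

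To compute $\sg(H')$, I would generalize \cref{lem:biclique}. Let $h(a,b)$ be the value of a graph satisfying the hypotheses of \cref{lem:biclique} except that one designated vertex $x$ on the side of size $a$ has no neighbor outside the biclique. Other vertices of the independent set, such as the $f_i$, may be adjacent to several biclique vertices, exactly as in that lemma. The claim to prove by induction on $a+b$ is
\[h(a,b)=g(a-1,b) \quad (a \geqslant 1).\]
As in the proof of \cref{lem:biclique}, the game reduces to a game on $K_{a,b}$ in which one may play an edge or a single vertex other than $x$. Thus the options from $h(a,b)$ are:
\begin{compactitem}
\item a vertex move on the $a$-side, giving $h(a-1,b)$;
\item a vertex move on the $b$-side, giving $h(a,b-1)$;
\item an edge avoiding $x$, giving $h(a-1,b-1)$;
\item an edge incident to $x$, giving a genuine instance of \cref{lem:biclique} of value $g(a-1,b-1)$.
\end{compactitem}

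The base cases are as follows. If $b=0$, then $x$ is isolated and is discarded, so the value is $g(a-1,0)$. If $a=1$, the options are $h(1,b-1)=g(0,b-1)$ and $g(0,b-1)$, so the value is $\mex\{(b-1)\bmod 2\}=b \bmod 2=g(0,b)$. For the inductive step with $a\geqslant 2$ and $b\geqslant 1$, the induction hypothesis turns the option set into $\{g(a-2,b),\, g(a-1,b-1),\, g(a-2,b-1)\}$, with one value repeated. The case analysis in the proof of \cref{lem:biclique} shows that the mex of this set is $g(a-1,b)$.

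Applying this with $a=m$ and $b=r$ gives $\sg(H')=g(m-1,r)$. Since $m$ is odd and $r\geqslant m$, we have $\min(m-1,r)=m-1$, which is even, so $g(m-1,r)=(m-1+r)\bmod 2=r \bmod 2$. Hence the position after $sa_j$ has value $(r+k)\bmod 2$, and $sa_j$ is winning exactly when $r+k$ is even.

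The step needing the most care is justifying the reduction of $H'$ to the vertex/edge game on $K_{m,r}$ in the presence of the $f_i$ and the missing pendant of $b_j$. One must check three things. First, every edge $b_{j'}f_i$ or $v_if_i$ acts as a vertex move on $b_{j'}$ or $v_i$; this holds because other biclique vertices keep their pendants. Second, no edge at $b_j$ other than biclique edges exists. Third, $b_j$ disappears for free only when $V$ is exhausted. The remaining work is the routine mex verification above.
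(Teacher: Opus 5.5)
Your proof is correct, but it takes a longer route than the paper. The paper needs no new induction. After $sa_j$, the vertex $b_j$ has lost $a_j$ and all its $F$-neighbors, so its only remaining neighbors lie in $V$, and it can simply be moved from the biclique side to the independent set. The set $(A\setminus\{a_j\}) \cup T \cup F \cup \{b_j\}$ is still independent, and every vertex of the smaller biclique $(B\setminus\{b_j\}, V) \cong K_{m-1,r}$ keeps a pendant neighbor. \cref{lem:biclique} therefore applies verbatim, giving the value $g(m-1,r)\oplus(k \bmod 2)$ at once; the lemma places no restriction on non-pendant vertices of the independent set, which may be adjacent to many biclique vertices exactly like the $f_i$. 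Your generalization $h(a,b)=g(a-1,b)$ is itself an immediate corollary of \cref{lem:biclique} by the same reassignment of $x$ to the independent set. So your induction re-derives something the lemma already contains, including its boundary cases ($b=0$, where $x$ becomes isolated, and $a=1$, where there is no vertex move on the $a$-side). Your option list, base cases and reduction of the inductive step to the mex computation of \cref{lem:biclique} (with $a-1\geqslant 1$, $b\geqslant 1$) do check out. Likewise, the ``step needing the most care'' that you flag is fully absorbed by the lemma once $b_j$ is viewed as an independent-set vertex. Your version gives a self-contained, explicit account of how the pendant-less vertex behaves during play. The paper's version is a one-line argument with no new mex verification.
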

  \begin{claimproof}
    After the move $sa_j$ we can apply \cref{lem:biclique} to the remaining biclique $(B \setminus \{b_j\}, V)$ and independent set $A \cup T \cup F \cup \{b_j\}$, leveraging the fact that $b_j$ has no remaining neighbor outside of~$V$, so can be transferred from the biclique to the independent set.
    This results in a~position of value $g(m-1,r) \oplus (k \mod 2)=(r+k) \mod 2$, since $m-1 \leqslant r$ and $m-1$ is even.
  \end{claimproof}

  We are now equipped to show that the reduction is correct.
  This is done in the following two claims.

  \begin{claim}\label{clm:false-wins}
    If \false has a~winning strategy on~$\varphi$, then the \textsc{Arc Kayles} instance $G_\varphi$ is winning for the second player~$O$.
  \end{claim}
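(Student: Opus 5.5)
The plan is to let the second player~$O$ simulate \false's winning strategy in \textsc{Pos CNF} using only assignment and pass moves. As soon as some clause vertex $b_j$ has lost all its neighbours in~$F$, $O$ finishes with $sa_j$ at the right parity. Throughout, \cref{clm:deviation} punishes any deviation by the first player~$P$, and \cref{clm:empty-clause} decides when $sa_j$ wins.

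\textbf{Parity bookkeeping.} Initially $r+k = R+K = m+2n+2+4n+6$, which is odd because $m$ is odd. Every assignment move and every pass move removes exactly one vertex of $V \cup Y$. So as long as only such moves are played, $r+k$ is even exactly when it is $O$'s turn and odd when it is $P$'s turn. By \cref{clm:empty-clause}, once a clause vertex $b_j$ has no neighbour left in~$F$, the move $sa_j$ is therefore winning for $O$ and losing for~$P$, provided $r \geqslant m$ still holds.

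\textbf{Strategy of $O$.} $O$ reads $P$'s moves as moves of \true in the permissive variant of \cref{lem:pb-monotonicity}:
\begin{compactitem}
  \item $v_it_i$ means \true claims $x_i$;
  \item $v_if_i$ means \true gives $x_i$ to \false;
  \item a pass move means a pass.
\end{compactitem}
By \cref{lem:pb-monotonicity}, \false still has a winning strategy in this variant. $O$ answers each such move by playing $v_if_i$, where $x_i$ is the variable that strategy tells \false to claim. If that variable is already gone, or the strategy would pass, $O$ claims an arbitrary unclaimed variable instead. Only if no variable remains unclaimed does $O$ play a pass move $y_hz_h$.

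$O$ responds to the other possible events as follows.
\begin{compactitem}
  \item If $P$ ever makes the first move that is neither an assignment nor a pass move, \cref{clm:deviation} gives $O$ a winning reply, unless that move is $sa_{j'}$ for an emptied clause. That exceptional move is made on $P$'s turn, when $r+k$ is odd, so it loses by \cref{clm:empty-clause}.
  \item Once some $b_j$ has no $F$-neighbour and it is $O$'s turn, $O$ plays $sa_j$ and wins by \cref{clm:empty-clause}. If the clause was emptied by $P$'s own move $v_if_i$, this happens immediately. If it was emptied by $O$'s move, $P$ must reply, any deviation loses as above, and after any assignment or pass reply $O$ plays $sa_j$.
\end{compactitem}

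\textbf{Verifying the hypotheses.} Since \false wins, once all variables are claimed some clause $C_j$ has every variable set to false. The corresponding vertices $f_i$ have been removed by the moves $v_if_i$, and the $t_i$ of true variables are irrelevant to~$b_j$, so $b_j$ has lost all its $F$-neighbours. Each round in which $P$ plays an assignment or pass move contains at least one claim by $O$ while unclaimed variables exist. Hence all $n$ variables are claimed within at most $n$ rounds, that is, at most $2n$ moves, with possibly one or two more moves before $O$ plays $sa_j$. Therefore
\begin{itemize}
  \item $r \geqslant R-2n-2 = m$, and
  \item $k \geqslant K-2n-2 \geqslant 1$,
\end{itemize}
at every moment where \cref{clm:deviation} or \cref{clm:empty-clause} is invoked.

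\textbf{Main obstacle.} The delicate step is this last counting. $P$ may spend moves on pass edges from $V$ or from~$Y$, so both $r \geqslant m$ and $k \geqslant 1$ must survive the worst case. The margins $R-m=2n+2$ and $K=4n+6$ appear designed exactly for this. One must also check that emptying a clause via $P$'s own false-assignment is handled with the correct parity, which the bookkeeping above guarantees.
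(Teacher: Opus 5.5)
Your proposal is correct and follows essentially the same route as the paper. $O$ simulates \false's winning strategy using only moves $v_if_i$, and \cref{clm:deviation} confines $P$ to assignment and pass moves. The parity of $r+k$ (odd on $P$'s turn, even on $O$'s) then makes $sa_j$ losing for $P$ and winning for $O$ by \cref{clm:empty-clause}, while the move count keeps $r \geqslant m$ and $k \geqslant 1$. You are also slightly more explicit than the paper about the case where $P$'s own move $v_if_i$ empties a clause, which the paper leaves implicit.
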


  \begin{claimproof}
    The second player mimics \false's strategy, and only plays assignment moves of the form $v_if_i$ until
    \begin{compactitem}
      \item a~vertex $b_j \in B$ has no neighbor in~$F$, or
      \item the first player $P$ plays a~losing move according to~\cref{clm:deviation} (and \cref{clm:empty-clause}).
    \end{compactitem}

    By~\cref{lem:pb-monotonicity}, $P$ cannot prevent the first condition from eventually holding by only playing assignment or pass moves.
    By~\cref{clm:deviation}, the only other type of moves that $P$ can play is $sa_j$ exactly when the first item holds for $b_j$ (as a~consequence of $O$'s last move).
    This happens by move number $2n+1$, at the latest, since $O$ only plays assignment moves.

    At this point, the number of pass or assignment moves $R-r+K-k$ is even, since $P$ and $O$ have played an equal number of such moves.
    Therefore $r+k$ has the same parity as $R+K = m + 6n + 8$, which is odd.
    Moreover, $r \geqslant m+2 \geqslant m$ and $k \geqslant 3n+6 \geqslant 1$.
    So, \cref{clm:deviation} indeed still applies and \cref{clm:empty-clause} implies that $sa_j$ is losing ($r+k$ is odd).

    Therefore, the next move of~$P$ has to be another assignment or pass move, to not be losing by~\cref{clm:deviation}.
    Then $O$ wins by playing $sa_j$, since now $r+k$ is even.
  \end{claimproof}

  \begin{claim}\label{clm:true-wins}
    If \true has a~winning strategy on~$\varphi$, then the \textsc{Arc Kayles} instance $G_\varphi$ is winning for the first player~$P$.
  \end{claim}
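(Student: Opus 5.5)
The plan is for the first player $P$ to simulate \true's winning strategy through the assignment moves, and then to win a short endgame made only of pass moves and a final move at $s$. I would proceed in three steps.

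\textbf{Step 1 (assignment phase).} By \cref{lem:pb-monotonicity}, \true still has a winning strategy in the permissive variant where \false may pass or assign either value. So $P$ plays $v_it_i$ for the variable chosen by \true's strategy. Any pass move of $O$ is read as a pass, and any assignment move of $O$ as claiming a variable. By \cref{clm:deviation}, while $r \geqslant m$ and $k \geqslant 1$, every other move of $O$ loses. The exception $sa_j$ requires some $b_j$ to have lost all its $F$-neighbours, which \true's strategy prevents. Indeed, a $b_j$ with no neighbour left in $F$ corresponds to a clause all of whose variables were set to false. Since $R$ and $K$ are large compared to $n$, the inequalities $r \geqslant m$ and $k \geqslant 1$ hold throughout, with slack: at most $2n$ moves are spent in this phase. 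When the phase ends, every variable is assigned and every $b_j$ still has a neighbour in $F$.

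\textbf{Step 2 (extending \cref{lem:biclique}).} I would first note that \cref{lem:biclique} applies even when the vertices of $F$ have several neighbours in $B$. It only needs $I$ to be independent and every vertex of $B$ to have a pendant neighbour; the vertex $a_j$ plays that role once $s$ is gone. Playing $b_jf_i$ still acts as a vertex move on $b_j$, because $f_i$ has no neighbours outside $B$.

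So whenever $s$ is removed, the value is $g(m',r) \oplus (k \bmod 2)$ or $g(m',r) \oplus ((k-1) \bmod 2)$. Here $m'$ is the number of remaining $B$-vertices, which is $m$ while $B$ is intact. When $r<m$ and $r$ is even, $\min(m,r)$ is even, so $g(m,r) = (m+r) \bmod 2 = 1$. The move $st$ then reaches value $0$ when $k$ is odd, and a move $sy_i$ reaches value $0$ when $k$ is even and $k \geqslant 1$. Thus $P$'s target is to be the player to move with $B$ intact, $r<m$, $r$ even and $k \geqslant 1$.

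\textbf{Step 3 (endgame).} After the assignment phase, only pass moves remain as safe moves while $r \geqslant m$. On her turns $P$ plays $v_it_i$ pass moves (for $i>n$) to drive $r$ down. When needed, she uses a $y_iz_i$ pass instead to fix the parity of $r$ at her turn. Since $K = 4n+6$ is much larger than the number of moves in this phase, $k$ stays positive. The goal is to be on move at $r=m-1$, or at some smaller even $r$ if $O$ also removes $V$-vertices.

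This is the step I expect to be the main obstacle. Once $r<m$, \cref{clm:deviation} no longer protects $P$, so every move of $O$ at $r \in \{m, m-1\}$ must be checked by hand using \cref{lem:union} and \cref{lem:biclique}. The dangerous moves are $O$ playing at $s$ himself, and $O$ playing an edge at some $b_j$, which changes the biclique to $(m-1, r')$. I would do a case analysis on the parities of $r$ and $k$ at the moment $r$ first drops to $m$ or $m-1$. In each case I would show that $P$ has a reply leading to value $0$: one of $st$ or $sy_i$, a $b_j$-edge, or a further pass. The choice between a $V$-pass and a $Y$-pass gives $P$ one free parity bit, and the constants $R = m+2n+2$ and $K = 4n+6$ are presumably tuned so that this bit suffices.
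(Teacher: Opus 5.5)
Your Step~1 matches the paper. Step~3, however, aims at a position that $O$ can always avoid, so the argument breaks exactly where you expected.

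Every assignment or pass move lowers $r+k$ by one, and $R+K=m+6n+8$ is odd. So $r+k$ is odd whenever $P$ is to move. Choosing a $V$-pass rather than a $Y$-pass does not change this, and $O$ has both kinds of passes too, so there is no ``free parity bit''. While $k\geqslant 1$, $r$ can only go below $m$ if someone plays an edge at $V$ when $r=m$; a $b_jv_i$ edge already loses by \cref{clm:deviation}. By your own computation $g(m,m-1)=1$, so that player then loses at once to $st$ or some $sy_h$. Hence at $r=m$ both players simply exhaust the $Y$--$Z$ edges, $O$ plays the last one, and $P$ is to move with $r=m$ and $k=0$ after $6n+8$ moves. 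Your target (being on move with $B$ intact, $r<m$ even and $k\geqslant 1$) is therefore never reached against correct play. Once $k=0$ your winning criterion is gone: with $q$ the number of remaining $B$-vertices, $q=m$ odd and $r$ even, the move $st$ yields $g(q,r)=1$, not~$0$.

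What is missing is the endgame at $k=0$. In the paper, $P$ herself plays some $v_it_i$ to reach $r=m-1$. She then maintains, at each of $O$'s turns, the invariants that $r$ is even, $q$ is odd, $r<q$, and every remaining $b_j$ has a neighbour in $F$. Her replies are as follows.
\begin{itemize}
\item She answers $v_it_i$ or $v_if_i$ by another $v_ht_h$, and $b_jv_i$ by another $b_{j'}v_{i'}$; both are available since $r$ is odd after $O$'s move.
\item She answers $sa_j$ with $b_j$ still present by $b_jf_i$, and $a_jb_j$ or $b_jf_i$ by $st$. By \cref{lem:biclique} both lead to value $g(q-1,r)=0$, since $q-1$ and $r$ are even.
\item If $O$ plays $st$, or $sa_j$ with $b_j$ already removed, the value is $g(q,r)=1$, so that move loses for $O$.
\end{itemize}
The fourth invariant is where \true's strategy is used a second time. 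Each clause contains a variable $x_h$ for which $v_ht_h$ was played, so $f_h$ can no longer disappear through $v_hf_h$. It can only disappear through a $B$--$F$ edge, which $P$ punishes with $st$.

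A case analysis at $r\in\{m,m-1\}$ with $k\geqslant 1$, as you propose, cannot replace this phase, since with correct play neither player leaves $r=m$ while $k\geqslant 1$.
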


  \begin{claimproof}
   Player $P$ mimics \true's strategy, and unless $O$ plays a~losing move by~\cref{clm:deviation}, only plays assignment moves of the form $v_it_i$ until the first item holds, and then plays moves of the form $v_it_i$ (assignment or pass) until the second item holds:
    \begin{compactitem}
      \item for every $b_j \in B$ there is an $i$ such that $x_i \in C_j$ and the edge $v_it_i$ has been played, and
      \item $r=m$.
    \end{compactitem}
    
    By~\cref{clm:deviation}, while $r \geqslant m$ and $k > 0$, player $O$ can only play assignment or pass moves.
    The exceptional moves of \cref{clm:deviation} are prevented by the winning strategy of \true, which implies that for every $b_j \in B$, there is an~$i$ such that $x_i \in C_j$ and $f_i$ remains in~$F$.
    Therefore, by~\cref{lem:pb-monotonicity}, $P$ achieves the first item after at~most $2n-1$ moves.

    Then, $P$ continues to play moves of the type $t_iv_i$ until the second item is also fulfilled.
    This happens by move $4n+4$, at the latest.
    Thus, $k > 0$ and \cref{clm:deviation} applies, forcing $O$ to play assignment or pass moves, until both items are satisfied.

    At this stage, a~further move $v_it_i$ or $v_if_i$ would leave $r=m-1$ and lose to exactly one of $st$ or some remaining $sy_h$.
    Indeed, recall that $g(m,m-1) \in \{0,1\}$ since $\min(m,m-1)=m-1$ is even.
    As a~consequence, by~\cref{clm:deviation}, every move other than an edge $y_hz_h$ loses, and we may assume that both players exhaust the edges between $Y$ and~$Z$.

    This phase ends when $k=0$ (all the edges of $Y \cup Z$ have been played).
    At this point, $K+R-m=6n+8$ moves have been played.
    This is an even number, so it is $P$'s turn.
    Player $P$ then plays any remaining edge $v_it_i$ (after that $r=m-1$), and for the last phase of the game maintains the following invariants, when it is $O$'s turn:
    \begin{itemize}
    \item $r$ is even,
    \item $q$, defined as the remaining number of vertices in $B$, is odd,
    \item $r<q$, and
    \item every remaining vertex of~$B$ has at least one neighbor left in~$F$.
    \end{itemize}

    Note that they initially hold, since $r=m-1 < q = m$.
    We now consider all the types of moves for $O$.
    For each, we show how $P$ replies to maintain the invariants or win the game with \cref{lem:biclique}.

    \subparagraph*{$\bm{O}$ plays an edge $\bm{v_it_i}$ or $\bm{v_if_i}$.}
    Then $P$ replies with another edge $v_ht_h$.
    Such an edge always remains since just after $O$'s move, $r$ is odd.
    This decreases $r$ by~2, and does not change $q$.
    Even when $O$ plays an edge $v_if_i$, the fourth item still holds, since for every $b_j \in B$, in the first stage of the game (when $P$ simulated \true's winning strategy for the \textsc{Pos CNF} instance $\varphi$) an edge $v_ht_h$ with $x_h \in C_j$ was played.
    This makes the move $v_hf_h$ unavailable to~$O$.
    So the invariants are indeed maintained.

    \subparagraph*{$\bm{O}$ plays an edge $\bm{b_jv_i}$.}
    Then $P$ replies with another edge $b_{j'}v_{i'}$, which still exists since $r$ is odd after $O$'s move.
    This decreases $r$ and $q$ by~2.
    Again, the four invariants are preserved.

    \medskip

    We now show that all other moves from $O$ lose by~\cref{lem:biclique} (even when $r=0$).
    
    \subparagraph*{$\bm{O}$ plays the edge $st$ or an edge $sa_j$ for an already removed $b_j$.}
    After such a~move, \cref{lem:biclique} applies with the remaining biclique $(B,V)$ and independent set $A \cup T \cup F$.
    The value of the position after this move is
    \[g(q,r) = (q+r) \mod 2 + 2 \cdot (\min(q,r) \mod 2) = 1 + 2 \cdot (r \mod 2)=1,\]
    since, by the invariants, $r$ is even, $q$ is odd, and $\min(q,r)=r$.
    Thus, the position is an $\N$-position, and $P$, now to move, has a~winning strategy.

    \subparagraph*{$\bm{O}$ plays an edge $sa_j$ for a~remaining $b_j$.}
    Then $P$ replies with an edge $b_jf_i$ (for the same $j$), which exists, by the fourth invariant.
    After $P$'s reply, \cref{lem:biclique} applies with the remaining biclique $(B,V)$ and independent set $A \cup T \cup F$.
    We also have that $q$ is now even, like~$r$.
    So the value of the position is $g(q,r) = 0$.
    Therefore it is a~$\P$-position, and $O$ loses.

    \subparagraph*{$\bm{O}$ plays an edge $a_jb_j$ or $b_jf_i$.}
    Then $P$ replies with the edge $st$.
    Again, \cref{lem:biclique} applies with $q$ and $r$ both even, so $g(q,r)=0$, and $O$ loses.

    \medskip

    There are no other moves for $O$.
    The game therefore ends in the following way.
    While $O$ plays an edge incident to $V$, player $P$ maintains the invariants, until possibly $r=0$.
    The first time $O$ plays an edge that is not incident to~$V$ (possibly, because $r=0$, so there are no edges incident to~$V$ left), $P$ has a~winning reply. 
  \end{claimproof}
  We conclude by~\cref{clm:false-wins,clm:true-wins}.
\end{proof} 

\subparagraph*{AI disclosure.}
The proof of~\cref{thm:main} was obtained as the result of a~session with GPT-5.6 Pro, continued with GPT-6 Pro.
The technical contributions of the author during this discussion were light.
In hindsight, the most important suggestion was perhaps that the model spend less time on running Python tests and ``try to solve the problem without computer assistance [sic].''  

\bibliography{main}

\end{document}